\documentclass{article}
\usepackage{graphicx}
\usepackage{hyperref} 
\usepackage{amsmath, amsthm, amssymb}
\usepackage{tikz}
\usepackage{subcaption}
\usetikzlibrary{shapes.geometric}
\usetikzlibrary{angles,quotes}
\usetikzlibrary{calc} 
\usetikzlibrary{arrows.meta, positioning}
\usepackage{cleveref}

\usepackage{tcolorbox}
\tcbuselibrary{skins}
\usepackage{dashrule} 
\usepackage{xcolor}
\usepackage{wrapfig}
\usepackage{fullpage}

\usepackage{booktabs}
\usepackage{nicematrix}

\usepackage{todonotes}
\usepackage{array}
\usepackage{tabularx}
\usepackage{threeparttable}
\usepackage{listings}
\usepackage{float}

\lstdefinestyle{promptstyle}{
    basicstyle=\ttfamily\small,
    frame=single,
    framerule=0.4pt,
    rulecolor=\color{black!35},
    backgroundcolor=\color{black!3},
    breaklines=true,
    columns=fullflexible,
    keepspaces=true,
    xleftmargin=1em,
    xrightmargin=1em
}

\newtheorem{theorem}{Theorem}

\newtheorem{lemma}[theorem]{Lemma}
\newtheorem{proposition}[theorem]{Proposition}
\newtheorem{corollary}[theorem]{Corollary}
\newtheorem{definition}[theorem]{Definition}
\newtheorem{remark}[theorem]{Remark}

\newtheorem{assumption}[theorem]{Assumption}

\title{Polynomial-Time Mistake-Bounded Language Generation}

\author{H\'{e}ctor Jimenez\\ University of Chile\\ \texttt{hjimenez@dim.uchile.cl} \and Alexander Kozachinskiy\\CENIA \\\texttt{alexander.kozachinskyi@cenia.cl} \and Vicente Opazo \\ CENIA \\\texttt{vicente.opazo@cenia.cl}}

\begin{document}
\maketitle

\begin{abstract}
    In this paper, we introduce a polynomial-time version of the mistake-bounded language generation (MBLG) framework due to Kleinberg, Peale, and Reingold (2026). We obtain upper and lower bounds for a number of simple families of Boolean functions. 
    
    Namely, we first observe that families of parities of variables, symmetric functions and 2CNFs are polynomial-time MBLG.
    We then show that the family of monotone functions with polynomially-many maxterms is polynomial-time MBLG. For instance, disjunctions of literals are monotone Boolean functions with 1 maxterms, and thus polynomial-time MBLG. Under the strong RSA assumption, we show that disjunctions of \emph{literals} are not polynomial-time MBLG. From the latter result, we deduce that polynomial-time MBLG families are not closed under union, and that there are families that are polynomial-time PAC learnable but not polynomial-time MBLG (again, under the strong RSA assumption). Finally, assuming existence of injective one-way functions, we show that there are polynomial-time MBLG families that are not polynomial-time PAC learnable.

\end{abstract}

\section{Introduction}
Motivated by the remarkable empirical capabilities of LLMs, a part of theoretical research in machine learning have begun to adapt its old models to this new reality. One example of this is a recent \emph{language generation in the limit} model of Kleinberg and Mullainathan~\cite{kleinberg2024language}. It builds upon a classical language identification in the limit model due to Gold~\cite{gold1967language}, where a learner gets exposed to a stream of words of some unknown formal language from a language family $\mathcal{F}$. The task is to identify this language eventually, where the guess can be updated with each new word from the stream.  In contrast, in the Kleinberg--Mullainathan model, the task is to simply output a word from the language not seen so far in the stream. For every language in the family and for every possible enumeration of it, there has to be a moment, starting from which we get this task right. This mild change, initially motivated by the practice of using LLMs as text generators, leads to a significantly richer mathematical theory. Families, identifiable in the limit, have been characterized by Angluin~\cite{angluin1980inductive}. However, we are aware only of one ``interesting'' family, satisfying these conditions -- namely, the family of pattern languages from Angluin's earlier paper~\cite{angluin1979finding}. Meanwhile, Kleinberg and Mullainathan have shown that \emph{every} countable family is generatable in the limit. This result sparked a growing interest in the model, with subsequent works establishing the result for generators, satisfying additional desirable properties like breadth~\cite{11369227,kalavasis2025limits} and representativity~\cite{pmlr-v267-peale25a}.


Although the result of Kleinberg and Mullainathan guaranties that there will be no more than finitely many mistakes, waiting for the last one can be infeasibly long.
Imagine we have just two languages in the family with a finite but very large intersection. If the stream starts with some enumeration of this intersection, we have no way to avoid a mistake when we get the last word of the intersection. Motivated by this observation, Arenas et al.~\cite{arenas2025language} show lower bounds on the size of the maximal finite intersection for a number of natural language families. For instance, they show there is no computable bound on the size of the intersection of two context-free languages whose intersection is finite (that is, there is no algorithm that, given two context-free languages $L_1, L_2$ such that $L_1\cap L_2$ is finite, returns some upper bound on $|L_1\cap L_2|$).

In a recent preprint, Kleinberg, Peale, and Reingold~\cite{kleinberg2026mistake} suggest that a better way of measuring success is not the time until the last mistake, but the total number of mistakes, no matter how far the last one is.  Observe that in the example, with two languages, it is easy to make at most 1 mistake. Indeed, if we see something outside the intersection, we already know the true language, and if we only see words from the intersection, we can always give another example from the intersection unless all of them are exhausted, which can happen at most once. Generalizing this example,  Kleinberg, Peale, and Reingold give a beautiful algorithm that, for finite families of size $s$, makes at most $\lceil \log_2 s \rceil$ mistakes.  Moreover, for countable families $\mathcal{F}$ they give an algorithm that for the $i$-th language of the family makes at most $\lceil \log_2 i\rceil$ mistakes.

\medskip

In this paper, we introduce a polynomial-time version of the mistake-bounded language generation (MBLG) framework of Kleinberg, Peale, and Reingold, in a style similar to  Valiant's polynomial-time PAC learning~\cite{kearns1987learnability} and Littlestone's online (a.k.a.~mistake-bound) learning~\cite{littlestone1988learning}. Let us first recall these models. Both start with a  $2^{poly(n)}$-size family $\mathcal{F}$ of Boolean functions $f\colon\{0, 1\}^n \to \{0, 1\}$ (or equivalently, of languages of binary words of length $n$ where these functions take value 1). In the PAC-model, the adversary secretly picks $f\in\mathcal{F}$ and a distribution $D$ over $\{0, 1\}^n$. The learner can sample pairs $(x, f(x))$ with $x\sim D$.  It gets an input $y\in\{0, 1\}^n$, sampled from $D$, with an unknown value of $f$, and its goal is to predict $f(y)$ with probability of error at most $\varepsilon$ after working $poly(n/\varepsilon)$-time.

In contrast, the online learning model is iterative:  the adversary first picks $f\in\mathcal{F}$, and then, in rounds, it picks $x\in\{0, 1\}^n$ and asks the learner --  ``what is the value of $f(x)$?''. After the learner takes its guess, the adversary reveals the correct value and proceeds to the next round. The goal of the learner is to make at most $poly(n)$ mistakes (independently of the number of rounds),  working in polynomial time (in $n$ and the number of rounds so far). 

\medskip

Similarly, in polynomial-time MBLG that we introduce here, there is a family \[\mathcal{F} = \{L_1, \ldots, L_s\}, \qquad L_1, \ldots, L_s\subseteq\{0, 1\}^n\] of languages of binary words of length $n$, with $s = 2^{poly(n)}$. The adversary secretly picks a language $L\in\mathcal{F}$, and starts printing its words in some order (the words have to be distinct). For each $i = 0, \ldots, |L| - 1$,  seeing the first $i$
 printed words $x^1,\ldots, x^i$, our goal is to output some word $y^i\in L\setminus\{x^1, \ldots, x^i\}$. We have to make at most $poly(n)$ mistakes, working in polynomial time -- more precisely, the time to produce $y^i$ from $x^1, \ldots, x^i$ should be $poly(ni)$. 
 Families for which this is possible will be called polynomial-time MBLG families. Note that in contrast to online learning, in the mistake-bounded generation model there is no feedback -- we never get any separate confirmation whether $y^i$ was indeed from $L$.

 \medskip

 In all 3 models, if we forget about polynomial-time requirements, any $2^{poly(n)}$-size family becomes learnable. Indeed, in PAC learning, it is enough to sample $p(\varepsilon, \log_2|\mathcal{F}|)$ pairs $(x, f(x))$ for some fixed polynomial $p$, pick any $\widehat{f}\in\mathcal{F}$ that is consistent with all these pairs, and predict according to this $\widehat{f}$~\cite{valiant1984theory}. In turn, in the online learning, the ``halving algorithm'' of Littlestone will work -- predict according to the majority of functions from $\mathcal{F}$ that were never wrong so far (each mistake decreases the number of potential functions by a factor of 2, and overall we get at most $\log_2|\mathcal{F}|$ mistakes).

Note that these solutions require explicitly going through all functions in $\mathcal{F}$, and thus are not implementable in polynomial time. A similar situation is with the algorithm of Kleinberg, Peale, and Reingold~\cite{kleinberg2026mistake} for the mistake-bounded language generation. Its overall structure is as follows: it assigns weights to every function (language) of $\mathcal{F}$, with all the weights being initially equal. Each time, among the words that have not been printed so far, we choose one that maximizes the sum of weights of languages that contain it. The weights of languages are updated after each new printed example from the adversary. Once again, although the algorithm guarantees at most $\log_2|\mathcal{F}|$ mistakes, it requires explicitly going through all languages in the family, which is not implementable in polynomial time in our setting.

\subsection*{Our results}

\paragraph{Examples.}  We start with some simple upper bounds, showing that the following families are polynomial-time MBLG:
    \begin{itemize}
        \item the family of parities of variables;
        \item the family of symmetric Boolean functions;
        \item the family of 2CNFs.
    \end{itemize}
    Here, instead of families of languages, we work with families of Boolean functions, where a Boolean function $f\colon\{0, 1\}^n \to \{0, 1\}$  is identified with the language of binary words of length $n$ where $f$ takes value 1.

    \paragraph{Upper bound for monotone functions.}  We then provide a general upper bound for monotone Boolean functions with a few \emph{maxterms}. A maxterm of a Boolean function $f$ is a maximal input where it takes value $0$ (any further switching of an input bit from 0 to 1 leads to value 1). We show 
\begin{theorem}
\label{thm_upperbound}
    For any polynomial $p(n)$, the family of monotone Boolean functions with at most $p(n)$ maxterms is polynomial-time MBLG.
\end{theorem}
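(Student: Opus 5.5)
The plan is a generator that never errs while it can stay inside the upward closure of what it has seen, and otherwise guesses a maximal unseen word, so that every mistake uncovers a new maxterm. Fix the target $f$ with maxterms $m_1,\dots,m_k$, where $k\le p(n)$, and let $L=f^{-1}(1)$. By monotonicity, $f(x)=0$ iff $x\le m_j$ for some $j$. So the zero set $Z=\{x: f(x)=0\}$ is the down-closure of the maxterms, and $L$ is upward closed. Let $S=\{x^1,\dots,x^i\}$ be the printed words; then $S\subseteq L$, and hence $\mathrm{up}(S)\subseteq L$.

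\textbf{Step 1 (safe moves).} If some $s\in S$ and coordinate $j$ satisfy $s_j=0$ and $s+e_j\notin S$, output $s+e_j$. This word lies in $\mathrm{up}(S)\subseteq L$, so the output is never a mistake. The search takes $O(in)$ membership tests. If no such pair exists, then $S$ is upward closed: any word of $\mathrm{up}(S)\setminus S$ that is minimal would be of the form $s+e_j$ with $s\in S$.

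\textbf{Step 2 (risky moves).} Suppose now that $S$ is upward closed, and let $D=\{0,1\}^n\setminus S$, a down-set containing $Z$. The algorithm keeps a set $C$ of past risky outputs. If $1^n\notin S$, output $1^n$; this is wrong only if $L=\emptyset$, and then nothing can be done anyway. Otherwise consider any maximal element $z$ of $D$ with $z\ne 1^n$. All upward neighbours of $z$ lie in $S$, so $z=s-e_j$ for some $s\in S$. Hence the maximal elements of $D$ can be enumerated from at most $in$ candidates, each checked in polynomial time. Output any maximal $z\in D$ with $z\notin C$, and add it to $C$.

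\textbf{Step 3 (existence of a fresh maximal element).} We must show such a $z$ exists whenever $L\setminus S\ne\emptyset$. Take $w\in L\setminus S\subseteq D$, and climb inside $D$ to a maximal element $z\ge w$. Then $z\in L$ by monotonicity. Every element of $C\setminus S$ is a mistaken guess, since a correct guess in $L$ could have been printed later and would then lie in $S$. More precisely, elements of $C\setminus S$ may still be correct but not yet printed, which is harmless for the argument. The point is that $z\in L$ is available to be output, and among maximal elements of $D$ outside $C$ we could pick a wrong one. So we must bound the wrong ones.

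\textbf{Step 4 (mistake bound).} This is the key point. Suppose a risky output $z$ is a mistake, i.e.\ $z\in Z$. Then $z\le m_j$ for some $j$. Since $z$ is maximal in $D\supseteq Z$, we get $z=m_j$, so $z$ is a maxterm. A mistaken $z$ is never printed, so it stays in $C$ and is never output again. Hence distinct mistakes are distinct maxterms, and there are at most $k\le p(n)+1$ mistakes in total. I expect the main care to be needed in Step 3/4: arguing that maximal elements of the complement of an upward-closed sample are either in $L$ or exactly maxterms, and that they are enumerable in time $\mathrm{poly}(ni)$ via the $s-e_j$ candidates. Repetitions are avoided because outputs in $S$ are excluded and $C$ is excluded.
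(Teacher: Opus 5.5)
\textbf{There is a genuine gap in Step 3.} A maximal element of $D$ outside $C$ need not exist when $L\setminus S\neq\emptyset$. Your own remark that unprinted correct guesses are ``harmless'' is exactly where the argument breaks.

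Suppose you output a correct $z$ in Step 2. Every word strictly above $z$ is already in $S$, and $D$ only shrinks. So $z$ remains a maximal element of $D$ until the adversary prints it, and the adversary may postpone that to the very end.

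Concretely, take $n=2$ and $f=x_1\lor x_2$, so $L=\{01,10,11\}$. After the adversary prints $11$, Step 2 outputs one of $01,10$, say $01$, and puts it in $C$. The adversary then prints $10$. Now $S=\{11,10\}$ is upward closed, and the only maximal element of $D=\{00,01\}$ is $01\in C$, yet $L\setminus S=\{01\}$. Your rule has no legal output here, so the algorithm is not even well defined.

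\textbf{Why the obvious repairs do not save Step 4.} Falling back to a non-maximal word of $D$ loses the charging of mistakes to maxterms. Such a word can be a non-maximal zero of $f$, and $f^{-1}(0)$ can be huge: for $f=x_n$, which has one maxterm, it has size $2^{n-1}$. So any repair must sometimes re-output a word of $C$. But once repeats are allowed, Step 4's conclusion that each maxterm is output at most once disappears, because you cannot tell which element of $C\cap M$ is a maxterm.

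\textbf{What the paper does instead.} At each crucial moment it outputs the maximal non-marked word that has so far been output the fewest times. It notes that between crucial moments at least one element of $M$ leaves permanently, while newcomers enter with count $0$. A potential argument (a level-$i$ coin costs $2^{i-1}$) then shows that for any word to be output $c$ times requires at least $2^{c-1}$ crucial moments. Since there are at most $2^n$ crucial moments, every word, in particular every maxterm, is output $O(n)$ times. This gives $O(n\,p(n))$ mistakes rather than your claimed $p(n)+1$.

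\textbf{What matches the paper.} Your safe moves above marked words, the observation that a wrong guess maximal in $D$ must be a maxterm, and the enumeration of maximal elements via the candidates $s-e_j$ all agree with the paper. The missing idea is the least-used selection rule together with the bound on repetitions.
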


    One can notice that any monotone Boolean function, computable by a polynomial-size decision tree, has $poly(n)$ maxterms, namely, at most the number of $0$-leaves of the tree. Indeed, every maxterm ends up in some 0-leaf, so it has to be consistent with the variables and the values that are asked on the path to this leaf, and on the variables that are not asked, it has to be equal to 1 -- otherwise one can switch some input bit from  0 to 1 without changing the value. We conclude 
    \begin{corollary} For any polynomial $p(n)$, the family of monotone Boolean functions, computable by $p(n)$-size decision trees, is polynomial-time MBLG.
    \end{corollary}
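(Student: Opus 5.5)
The plan is to use the CNF description of a monotone function. If $m^1,\ldots,m^k$ ($k\le p(n)$) are the maxterms of a monotone $f$, then $f(x)=0$ if and only if $x\le m^j$ coordinatewise for some $j$. The generator will combine two facts:
\begin{itemize}
\item By monotonicity, every word above a printed word lies in $L$, so it is a safe guess.
\item Whenever the generator has to guess outside this upward closure, it guesses a word $y$ such that every word strictly above $y$ has already been printed. If such a $y$ is not in $L$, then $y$ is itself a maxterm of $f$, since all its strict superiors are in $L$.
\end{itemize}
Hence every mistake reveals a new maxterm, and there are at most $p(n)$ mistakes.

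Concretely, let $S=\{x^1,\ldots,x^i\}$ be the printed words and $M$ the set of our own earlier outputs that we now know were wrong. We actually treat $M$ as the set of all earlier outputs $y$ produced in the second step below. We never output such a $y$ again, so it does no harm to keep all of them. The algorithm works as follows.

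\emph{Step 1.} For each $x\in S$ and each coordinate $j$ with $x_j=0$, look at $x+e_j$. If some such word is not in $S$, output it. It lies in $L$ by monotonicity, so this output is never a mistake.

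\emph{Step 2.} Otherwise $S$ is upward closed, because every element of $S$ is connected to any of its superiors by a chain of one-bit flips. Consider the candidate set
\[C=\{s-e_j : s\in S,\ s_j=1\}\cup\{1^n\}.\]
Output any $y\in C\setminus(S\cup M)$ such that $y+e_j\in S$ for every $j$ with $y_j=0$. There are at most $ni+1$ candidates, and each is checked in $poly(ni)$ time.

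The main point is to show that Step 2 always finds a valid word while $L\setminus S\neq\emptyset$. Take $z\in L\setminus S$ and walk upward: while some $z+e_j\notin S$, replace $z$ by $z+e_j$. The new word is still in $L$ by monotonicity and still outside $S$, and the walk ends because Hamming weight increases. At the end, $z\in L\setminus S$ and all one-flip superiors of $z$ are in $S$. Then $z\in C$: either $z=1^n$, or $z=(z+e_j)-e_j$ for some $j$ with $z+e_j\in S$. Moreover $z\notin M$, because we will show that every element of $M$ is a non-member of $L$, which is a maxterm, or a word that has since been printed. So a valid candidate exists.

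It remains to bound the mistakes. Suppose Step 2 outputs $y\notin L$. Since $S$ is upward closed and all one-flip superiors of $y$ are in $S\subseteq L$, every word strictly above $y$ is in $L$. Hence $y$ is a maximal non-member of $L$, that is, a maxterm. Because $y\in M$ from then on, the same maxterm is never output again. Therefore the total number of mistakes is at most the number of maxterms, which is at most $p(n)$.

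Finally, we must handle an output $y\in M$ that is actually in $L$. It may later be printed, in which case it enters $S$. If it is never printed, excluding it is still harmless: the existence argument above produces a candidate $z$ that has never been output, provided we first run the upward walk starting from any unprinted element of $L$ outside $M$. The remaining obstacle, and the main one, is to make sure that excluding $M$ never blocks us. The cleanest fix is not to exclude $M$ at all and instead to allow repeats. A repeated output $y\in L\setminus S$ is still correct. A repeated output $y\notin L$ is again the same maxterm, so we should exclude exactly those outputs that are wrong. Since we receive no feedback, we exclude an earlier Step-2 output only when it is not in $S$ and is still a valid candidate. The counting then has to be redone with some care; the plan is to charge each mistake to a distinct maxterm by arguing that a correct earlier output which is never printed can be replaced by another candidate obtained through the upward walk.
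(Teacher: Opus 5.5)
\textbf{Genuine gap in Step 2.} Your skeleton matches the paper's: output safe words above printed ones, and at crucial moments output a maximal unprinted word. But your existence claim for Step 2 is false. It rests on ``every element of $M$ is a non-member of $L$ or has since been printed'', and the adversary may postpone printing your correct outputs indefinitely.

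Concretely, take $n=2$ and $f\equiv 1$, which is monotone, has no maxterms, and is computed by a one-leaf tree.
\begin{itemize}
\item You output $11$, and the adversary prints $11$.
\item You output $01$ (the case $10$ is symmetric), and the adversary prints $10$.
\item Now $S=\{11,10\}$ is upward closed, and its only maximal non-member is $01\in M\cap L$.
\item The other candidate $00$ fails your test, since $00+e_2=01\notin S$.
\end{itemize}
So Step 2 has no admissible output, although $L\setminus S=\{00,01\}$.

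Your suggested fix is not carried out, and its target cannot be met. Without feedback, a previously output maxterm is indistinguishable from a correct but unprinted word. In fact, no rule that outputs maximal unprinted words at crucial moments achieves one mistake per maxterm. For $n=4$, the adversary does the following.
\begin{itemize}
\item It prints $1111$, then weight-$3$ words, plus if necessary one weight-$2$ word you have not output, until you have output two distinct weight-$2$ words $a,b$.
\item It then prints the rest of $L_a\cap L_b$, where $L_a=\{x: x\not\preceq a\}$.
\end{itemize}
Now $a$ and $b$ are the only maximal unprinted words. Whichever you repeat, the one of $L_a,L_b$ excluding it is a function with a single maxterm that has received two mistakes. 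If you repeat $a$ earlier, the same conclusion holds for $L_a$.

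\textbf{What is missing.} You need a way to tolerate repeats while bounding them. The paper outputs a \emph{least-used} maximal unprinted word. It notes that between consecutive crucial moments at least one maximal unprinted word disappears for good. A potential (coin) argument then shows that a word can reach $k$ outputs only after at least $2^{k-1}$ crucial moments. Since there are at most $2^n$ crucial moments, every maxterm is output at most $n+1$ times. This gives $(n+1)p(n)$ mistakes, not $p(n)$, which is all polynomial-time MBLG requires.

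\textbf{Unused hypothesis.} You never use the decision-tree hypothesis; you simply posit $k\le p(n)$. For this corollary, that is exactly the step the paper proves. A maxterm reaching a $0$-leaf must equal $1$ on every variable not queried along that path; otherwise flipping such a $0$ keeps the input in the same leaf, contradicting maximality. Hence each $0$-leaf contains at most one maxterm, there are at most $p(n)$ maxterms, and Theorem~\ref{thm_upperbound} applies.
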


    \paragraph{A lower bound for disjunctions of literals.} Monotone functions with 1 maxterms are precisely disjunctions of variables. What about disjunctions of literals? We show that under the strong RSA assumption -- one of the standard cryptographic assumptions~\cite{baric1997collision,cramer2000signature,ateniese2000practical} -- even disjunctions of \emph{three} literals cannot be learned in our framework.
\begin{theorem}
\label{thm_lowerbound}
    Under the strong RSA assumption, the family of disjunctions of 3 literals is not polynomial-time MBLG.
\end{theorem}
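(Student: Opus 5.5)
The plan is a reduction. Assume a polynomial-time generator $G$ makes at most $q(n)$ mistakes on every disjunction of 3 literals and every enumeration. From $G$ I would build a probabilistic polynomial-time algorithm that, given an RSA modulus $N$ and a random $y\in\mathbb{Z}_N^*$, outputs a pair $(x,e)$ with $e>1$ and $x^e\equiv y \pmod N$ with non-negligible probability. That contradicts the strong RSA assumption.

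\textbf{The constraint that shapes the construction.} The complement of a disjunction of 3 literals is a subcube of codimension $3$. Hence every language in the family has density $7/8$, and the family has only $O(n^3)$ members. Guessing blindly therefore only goes wrong near the end of an enumeration, when the adversary has exhausted most of $L$. At that point $i$ is exponential and the allowed time $\mathrm{poly}(ni)$ would permit brute force. The reduction therefore cannot simply feed $G$ a long explicit stream. Instead I would encode the problem so that $G$ is forced to commit to outputs while it still lacks the information that brute force would reveal. Concretely:
\begin{itemize}
\item Arrange the variables as blocks that encode residues modulo $N$ and exponents $e$.
\item Order the adversary's enumeration so that each already-printed word certifies a relation $x^{e}=y^{k}$ for known exponents.
\item A fresh correct word (one the adversary has not printed) should encode a new relation of this kind.
\end{itemize}
By Shamir's trick, two relations $x_1^{e_1}=y^{k_1}$ and $x_2^{e_2}=y^{k_2}$ with $\gcd$ conditions yield a nontrivial root of $y$, which is exactly a strong RSA break.

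\textbf{Steps, in order.}
\begin{enumerate}
\item Define the encoding and the simulated enumeration. The reduction must be able to print the adversary's words efficiently, knowing only $N,y$ and trapdoor-free data, e.g.\ words generated from powers $y^{\prod e_j}$.
\item Show that each step at which $G$ is correct either outputs a word the reduction could already produce, or yields a relation from which an $e$-th root of $y$ can be extracted via the extended Euclidean algorithm.
\item Use the mistake bound. Because $G$ errs at most $q(n)$ times, among $q(n)+1$ well-chosen steps at least one is correct. The reduction guesses which one, losing only a polynomial factor in success probability.
\item Argue that the simulated stream is a legal enumeration of a single disjunction of 3 literals (distinct words, all inside $L$). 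The three literals should be hidden, so that $G$ cannot distinguish the simulated stream from a genuine one.
\end{enumerate}

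\textbf{Main obstacle.} I expect step 1 to be the hardest, since the density objection above must be circumvented there. The enumeration has to push $G$ into a regime where most unseen words lie outside $L$, while keeping the printed prefix polynomially short relative to what $G$ must compute. My first attempt would be to let the adversary's printed words be fixed by a succinct algebraic rule, so that the remaining $L$-words can only be found by solving $x^e=y$. The approach fails if $G$'s time budget $\mathrm{poly}(ni)$ can instead be used to enumerate candidates; in that case I would revisit how the block encoding makes exponential-length prefixes unnecessary for the hardness argument.
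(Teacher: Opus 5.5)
Your plan has a genuine gap, and it is the ``main obstacle'' you identify yourself. As long as the target language is a single disjunction $D$ of 3 literals, Step~2 is false and Step~1 cannot be carried out. Correctness of $G$ at step $i$ only means that its output is some unseen word satisfying $D$. After polynomially many printed words there remain at least $\frac78 2^n-\mathrm{poly}(n)$ such words, and almost none of them encode a relation $x^e\equiv y^k$. So no block encoding or ordering of the stream can make a correct answer for one clause yield a root of $y$. For the same reason, the regime ``most unseen words lie outside $L$'' is unreachable with a polynomial prefix. Hiding the three literals does not change this by itself.

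The missing idea is to never use a single clause as the target, and instead transfer the mistake bound to intersections. Let $\phi=D_1\wedge\cdots\wedge D_m$ be a 3-CNF and let $x^1,\dots,x^i$ be distinct satisfying assignments of $\phi$. This prefix extends to a full enumeration of every $L_{D_j}$, $G$'s output depends only on the prefix, and an output that is wrong for $\phi$ is wrong for at least one $D_j$. Hence $G$, unchanged, makes at most $m\,q(n)$ mistakes on every 3-CNF. This also removes your worry about the $\mathrm{poly}(ni)$ budget, since only polynomially long prefixes are ever used.

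With that, the target can be arbitrarily sparse. Set $\widehat x=x^{p_1\cdots p_t}$ for $t>m\,q(n)$ random primes, and Tseitin-encode a polynomial-size circuit accepting $(e,z)$ iff $e<N$ is prime and $z^e\equiv\widehat x \pmod N$. The reduction knows the circuit, so it can compute the auxiliary variables of the printed words $(p_j,x^{\prod_{k\neq j}p_k})$. Some output among the first $t$ steps is then a fresh valid pair $(q,z)$. It can be verified directly, so no guessing is needed, and Shamir's trick gives a $q$-th root of $x$.

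You would still need to rule out $q\in\{p_{i+1},\dots,p_t\}$. The paper handles this as follows: when the $p_j$ are coprime to $\varphi(N)$, the tuples $(x^{p_1\cdots p_t},p_1,\dots,p_t)$ and $(x,p_1,\dots,p_t)$ are identically distributed. So $q$ is a function of $x,p_1,\dots,p_i$ alone and is independent of the later primes.
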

Note that the size of this family is $O(n^3)$. Hence, the algorithm of Kleinberg, Peale, and Reingold~\cite{kleinberg2026mistake} makes at most $O(\log n)$ mistakes on it. Thus, \ref{thm_lowerbound} implies that this algorithm is not implementable in polynomial-time  under the strong RSA assumption. However, we stress that the actual statement is stronger -- we are claiming that no polynomial-time algorithm even with $poly(n)$ mistakes exists.

\begin{remark}
    For readers familiar with the algorithm of  Kleinberg, Peale, and Reingold, it might be instructive to understand, which part of the algorithm is not implementable in polynomial-time. As there are just $O(n^3)$ languages, it is actually not a problem to maintain the weights for each of the languages. The problem is to find a word $x$, not seen previously, maximizing the weights of languages, containing it. A naive implementation of this sub-procedure requires an exponential search, and Theorem \ref{thm_lowerbound} implies that there is probably no way of performing it in polynomial time for disjunctions of 3 literals. 
\end{remark}

Notice that disjunctions of 2 literals are special cases of 2CNFs, and thus are polynomial-time MBLG.

\paragraph{Further observations.}  Finally, we draw some interesting conclusions from the lower bound. First, we observe that it implies that polynomial-time MBLG families are not closed under union (under the strong RSA assumption).  It also shows that MBLG can be harder than PAC learning -- it is known that the family of disjunctions of variables is polynomial-time PAC learning.

We also show that that the opposite relationship with PAC learning is possible. Under the existence of injective one-way functions, we give an example of a polynomial-time MBLG family which is not polynomial-time PAC learnable (and thus not polynomial-time online learnable~\cite{blum1994separating}).




\section{Polynomial-time MBLG}

Formally, we have to work with infinite sequences  $\{\mathcal{F}_n\}_{n = 1}^\infty$ of families,where $\mathcal{F}_n = \{L_1, L_2, \ldots, L_{s(n)}\}$ and $L_1, L_2, \ldots, L_{s(n)}\subseteq \{0, 1\}^n$.

A \emph{generator} is a function $G$ that takes on input a natural number $n$ (written in unary) and a sequence of distinct binary words of length $n$, and returns a binary word of length $n$.


\begin{definition} Let $\{\mathcal{F}_n\}_{n = 1}^\infty$ be a sequence of families of languages, where $\mathcal{F}_n$ consists of languages of binary words of length $n$ for all $n$. Further, let $G$ be a generator and $e\colon\mathbb{N}\to\mathbb{N}$ be a function.

We say that $G$ makes at most $e(n)$ mistakes on $\{\mathcal{F}_n\}_{n = 1}^\infty$ if for every $n$, for every $L\in\mathcal{F}$, and for every $x^1, \ldots, x^{|L|}\in\{0, 1\}^n$ such that
\[L = \{x^1, \ldots, x^{|L|}\},\]
there exists at most $e(n)$ indices $i \in\{0, 1, \ldots, |L| - 1\}$ such that:
\[G(1^n, x^1 x^2 \ldots x^i) \notin L\setminus\{x^1, \ldots, x^i\}.\]

\end{definition}

\begin{definition}
Let $\{\mathcal{F}_n\}_{n = 1}^\infty$ be a sequence of families of languages, where $\mathcal{F}_n$ consists of languages of binary words of length $n$ for all $n$, and $|\mathcal{F}_n| \le 2^{poly(n)}$. If this sequence of families admits a polynomial-time generator, making $poly(n)$ mistakes, we say that the family is \textbf{polynomial-time MBLG}. \end{definition}

In the results below, to increase readability, we no longer explicitly consider sequences of families $\{\mathcal{F}_n\}_{n = 1}^\infty$. Instead, we simply say ``$\mathcal{F}_n$ is polynomial-time MBLG'', implicitly assuming that $n$ is a parameter, going to infinity, yielding a sequence of families.

A language $L\subseteq \{0, 1\}^n$ is identified with a Boolean function $f_L\colon\{0, 1\}^n \to \{0, 1\}$ taking value 1 precisely on words from $L$. This way we will be sometimes talking about ``families of Boolean functions'' instead of ``families of languages''. For example, a 3-CNF $\phi$ over variables $x_1, \ldots, x_n$ is identified with the language of inputs on which $\phi$ is evaluated to 1, that is, inputs that satisfy $\phi$. This allows us to consider the family of 3-CNFs over variables $x_1, \ldots, x_n$ and make statements, whether or not this family (or, more precisely, a sequence of families) is polynomial-time MBLG.

\section{Some Upper Bounds}



For a Boolean variable $x_i$ and $\alpha\in\{0, 1\}$, denote:
\[x_i^\alpha = \begin{cases}
    x_i & \alpha = 1,\\
    \lnot x_i & \alpha = 0.
\end{cases}\]
\begin{proposition}
\label{prop_conj}
The family of conjunctions of literals: 
    \[\mathcal{F} = \{ x^{\alpha_1}\land x^{\alpha_2}\land\ldots \land x^{\alpha_k}: \qquad k\ge 0,\qquad  1\le i_1 < \ldots < i_k \le n, \qquad \alpha_1\ldots \alpha_k\in\{0,1\}^k\}\] is polynomial-time MBLG. 
\end{proposition}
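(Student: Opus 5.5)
We propose a generator in the spirit of the classical elimination algorithm for learning conjunctions, but run in reverse. A conjunction of literals on $k$ distinct variables defines a subcube of $\{0,1\}^n$ of dimension $n-k$: the fixed coordinates are $i_1,\dots,i_k$ and the rest are free. After seeing $x^1,\dots,x^i$ with $i\ge 1$, the generator computes the smallest subcube $C_i$ containing all seen words. A coordinate $j$ is fixed in $C_i$ exactly when all $x^1,\dots,x^i$ agree on position $j$, and then it is fixed to that common value. Since the true language $L$ is itself a subcube containing the seen words, we have $C_i\subseteq L$. Hence any word of $C_i\setminus\{x^1,\dots,x^i\}$ is a correct output.

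The generator works as follows.
\begin{itemize}
\item If $i=0$, it outputs an arbitrary word, say $0^n$. This counts as at most one mistake.
\item If $i\ge 1$ and $|C_i|>i$, it outputs some word of $C_i$ not among $x^1,\dots,x^i$. Such a word exists, and the output is correct.
\item If $|C_i|=i$, so that $C_i$ is exhausted, it outputs an arbitrary word. This may be a mistake.
\end{itemize}
Note that the correctness argument never uses the order in which words are printed, only that they are distinct elements of $L$.

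For efficiency, note that $C_i$ can be computed in $O(ni)$ time by a coordinatewise scan. Its size is $2^{d}$, where $d$ is the number of free coordinates. So the test $|C_i|>i$ is trivial: if $d\ge \log_2(i+1)$ the cube is certainly larger than $i$, and otherwise $|C_i|\le i$ and we compare directly. To find an unseen element when $|C_i|>i$, we enumerate the first $i+1$ elements of $C_i$ in lexicographic order over the free coordinates. At least one of them is not among the $i$ seen words, and finding it takes $poly(ni)$ time.

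The step I expect to need the most care is bounding the number of mistakes, which happen only when $C_i$ is exhausted. Each such step has $C_i\subseteq\{x^1,\dots,x^i\}\subseteq C_i$, so $|C_i|=i$. The key observation is that the cubes $C_i$ are nested and nondecreasing in $i$, and their sizes are powers of two. If two mistake steps $i<i'$ happened with $C_i=C_{i'}$, this would force $i=|C_i|=|C_{i'}|=i'$, a contradiction. So each mistake step has a strictly larger cube than the previous one, meaning its dimension strictly increased. The dimension ranges over $0,\dots,n$, so there are at most $n+1$ such steps, plus the step $i=0$. This gives at most $n+2$ mistakes, which is $poly(n)$, and the family (of size $3^n$) is therefore polynomial-time MBLG.
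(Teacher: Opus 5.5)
Your proposal is correct and is essentially the paper's proof. Both output an unseen word that agrees with all examples on the coordinates where they coincide, and both bound the mistakes by noting that each exhausted step forces the set of fixed coordinates to shrink. You phrase this last step via $|C_i|=i$ rather than via "the next word removes a position from $P$", and your count of $n+2$, including the $i=0$ step, is slightly more careful than the paper's "at most $n$".
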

\begin{proof}
    Given an input sequence $(x^1,\ldots, x^i)$, we find the set $P$ of all positions where all words $x^1, \ldots, x^i$ coincide. We know $P$ includes the unknown positions $i_1, \ldots, i_k$, with the right values $\alpha_1, \ldots, \alpha_k$, but maybe some other positions too.   We try to find another word that has the same values in positions from $P$ as $(x^1,\ldots, x^i)$ -- it would be a safe guess to make. What can happen, however, is that all such words are already within $x^1, \ldots, x^i$. But then the next word of the adversary will remove some position from $P$, so this can happen at most $n$ times.
    
   Note that this strategy is implementable in polynomial time in the length of $(x^1, \ldots, x^i)$. Constructing $P$ is straightforward, and then we enumerate all words that have the required values in positions from $P$, in search of one that differs from all $x^1, \ldots, x^i$. This will take us at most $i$ steps of enumeration.

\end{proof}

\begin{proposition} 
    The family of parities of variables:
    \[\mathcal{F} = \{x_{i_1} \oplus x_{i_2} \oplus \ldots \oplus x_{i_k} \oplus c \mid 1\le k\ge 0,\qquad i_1 < \ldots < i_k\le n, \qquad c\in\{0, 1\}\}\] is polynomial-time MBLG.
\end{proposition}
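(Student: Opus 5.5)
The plan is to mirror the proof of Proposition~\ref{prop_conj}, with affine subspaces of $\mathbb{F}_2^n$ playing the role of subcubes. Every language in the family is either empty (the constant-zero function, where nothing is ever asked) or all of $\{0,1\}^n$ (the constant-one function, $k=0$, $c=1$). In all other cases it is an affine hyperplane $\{x : \langle a, x\rangle = 1\oplus c\}$ with $a\neq 0$.

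Given the seen words $x^1,\ldots,x^i$ with $i\ge 1$, the generator computes their affine span
\[A = x^1 + \mathrm{span}\{x^2 - x^1,\ldots,x^i - x^1\}\]
using Gaussian elimination over $\mathbb{F}_2$. It keeps a basis $b_1,\ldots,b_d$ of the linear part. Since the true language $L$ is an affine subspace containing all $x^j$, we have $A\subseteq L$, so any element of $A\setminus\{x^1,\ldots,x^i\}$ is a safe guess. For $i=0$ it outputs an arbitrary word, which costs at most one mistake.

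To find a fresh element of $A$ in polynomial time, the generator enumerates the points $x^1+\sum_{j\le d}\lambda_j b_j$ for $\lambda$ ranging over $\{0,1\}^d$ in some fixed order. It stops at the first one not in $\{x^1,\ldots,x^i\}$. At most $i+1$ candidates need to be examined, each is checked in $poly(ni)$ time, and Gaussian elimination is also polynomial. If $|A| = 2^d \le i$, then $A$ is exhausted. In that case the generator makes an arbitrary guess and may err.

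The mistake bound comes from a potential argument on $\dim A$. A mistake at step $i\ge1$ can only occur when $A\subseteq\{x^1,\ldots,x^i\}$. The next adversary word $x^{i+1}$, if there is one, lies outside $A$, so $\dim A$ strictly increases. Since $\dim A\le n$, there are at most $n+1$ mistakes in total, plus possibly the one at $i=0$. There is no real obstacle here. The only care needed is in the bookkeeping: maintaining the span incrementally (or recomputing it each round), and noting that $\dim A$ never decreases, so the potential is monotone.
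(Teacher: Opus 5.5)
Your proposal is correct and follows essentially the same argument as the paper. You output a fresh point of the affine span of the seen words, found by enumerating basis combinations added to $x^1$ with at most $i+1$ candidates checked, and charge each mistake to a strict increase in the dimension of that span; your separate treatment of $i=0$ and your slightly looser mistake count are only harmless bookkeeping differences.
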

\begin{proof}
    When we have examples $x^1, \ldots, x^i$, it is safe to output any example from their affine span over $\mathbb{F}_2$. The only way it is not possible to output a fresh string like that is when the set  $\{x^1,\ldots, x^i\}$ coincides with its own affine span. But then the next example of the adversary will increase the dimension of the span, so we will have at most $n$ mistakes.

    To implement this in polynomial time, we find a maximal independent subset of $\{x^2 - x^1, \ldots, x^i - x^1\}$, and then start enumerating all words that can be obtained as its linear combinations plus $x^1$. It takes at most $i+1$ steps to find a new word because every word out of $x^1, \ldots, x^i$ can cause us trouble just once. 
\end{proof}


\begin{proposition} The family of symmetric Boolean functions $f\colon\{0, 1\}^n\to \{0, 1\}$ is polynomial-time MBLG.
\end{proposition}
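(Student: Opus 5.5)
The plan follows the same template as Proposition~\ref{prop_conj} and the parity case: maintain a "safe set" determined by the examples seen so far, output a fresh element of it when possible, and charge every failure to a strict growth of some monotone parameter. A symmetric function $f$ is determined by the set $S_f\subseteq\{0,1,\ldots,n\}$ of Hamming weights on which it equals $1$. From the examples $x^1,\ldots,x^i$ we compute the set $W=\{|x^1|,\ldots,|x^i|\}$ of observed weights, where $|x|$ denotes the number of ones in $x$. The true language $L$ has $W\subseteq S_L$, so every word whose weight lies in $W$ belongs to $L$. Hence any word of weight in $W$ that is different from $x^1,\ldots,x^i$ is a correct output. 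When $i=0$ there is nothing to go on; we output an arbitrary word and count this as one possible mistake.

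For the mistake bound, suppose that for some $i\ge 1$ no such fresh word exists. Then all words of every weight in $W$ have already been printed. If $i<|L|$, the next printed word $x^{i+1}$ must therefore have a weight outside $W$, so $|W|$ strictly increases. Since $|W|\le n+1$, mistakes occur at most $n+2$ times in total, counting the round $i=0$. In every other round we output a word of $L$ that has not been printed, so these are the only possible mistakes.

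For polynomial time, we compute $W$ in time $O(ni)$. We then choose any $w\in W$ and enumerate words of weight $w$ in lexicographic order, moving to the next element of $W$ once all words of weight $w$ have been listed. Each enumerated word is compared against the set $\{x^1,\ldots,x^i\}$. Every enumerated word is either among the $x^j$ or is a valid answer, so we stop after at most $i+1$ candidates. Enumerating the next word of a given weight takes $\mathrm{poly}(n)$ time, and checking whether a weight class is exhausted needs no binomial computation: we simply run out of words in that class. The total time is therefore $\mathrm{poly}(ni)$.

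The only step needing any care is this enumeration. It is handled by a standard successor procedure for fixed-weight strings, or more simply by enumerating all strings in lexicographic order and skipping those whose weight is not in $W$. The second option is too slow in general, so I would use the fixed-weight successor procedure.
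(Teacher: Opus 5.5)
Your proposal is correct and follows essentially the same argument as the paper: the safe set is the union of the already observed weight classes, and each round in which this set is exhausted is followed by a word of a new weight, so there are $O(n)$ mistakes. The only difference is how a fresh word is found: the paper swaps a $1$ and a $0$ in an observed word and relies on each weight class being connected under such swaps, while you enumerate the observed weight classes directly with a fixed-weight successor procedure. Both run in $\mathrm{poly}(ni)$ time, and your explicit treatment of round $i=0$ covers a case the paper glosses over.
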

\begin{proof}
    
    Consider $L_{f}$ the language of words that satisfy a symmetric function $f$. The algorithm receives a streaming of words $\{w_1,...,w_k\}\subseteq L_{f}$ of length $m$ and iterates over these. In the word $w_t$, for some $t \in \{1,...,k\}$, it iterates over positions searching for $1$'s. If there are a positions $j,j' \in [m]$ such that $w_t[j]=1$ and $w_t[j']=0$, it generates the word $w_t^{*} \in L_f$ by swapping bits in positions $j,j'$. If $w_t^{*} \notin \{w_1,...,w_k\}$, then the algorithm returns $w_t^{*}$. Otherwise, it continues the swapping procedure for $w_t$. The algorithm have complexity $\mathcal{O}(n^2m^2)$.

    We notice that words in $L_f$ form a graph $G=(V,E)$, where $V=L_f$ and edges are defined by the relation $ E:=\{ (u,v) \mid u \text{ can generate } v \text{ by a swapping a bit 1 with a bit 0} \} $. Moreover, all the words that share the same number of ones form a connected component. Therefore, we can always generate a new edge ($w_t^{*}$) of the graph $G$, except in the cases when the adversary send all vertices (words in the streaming) in a connected component. In those cases, the algorithm could make a mistake. But, we notice that this happens at most $n$ times (there are at most $n$ connected components), so the algorithm can have at most $n$ mistakes. 
    
    
    
\end{proof}

Next example generalizes the family of conjunctions of variables.

\begin{proposition} 
\label{prop_2cnf}
The family of 2-CNFs over variables $x_1, \ldots, x_n$ is polynomial-time MBLG.
\end{proposition}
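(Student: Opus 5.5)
Given the examples $x^1,\ldots,x^i$, I will output a fresh satisfying assignment of the \emph{strongest} 2-CNF consistent with them. Let $C$ be the set of all 2-clauses (disjunctions of at most two literals, including unit clauses) satisfied by every $x^j$; there are $O(n^2)$ candidate clauses, so $C$ is computable in time $poly(n,i)$. Let $\phi_C$ be their conjunction. The unknown target $\psi$ is a conjunction of clauses each satisfied by all examples, so every clause of $\psi$ lies in $C$ and $\phi_C \models \psi$. Hence any satisfying assignment of $\phi_C$ outside $\{x^1,\ldots,x^i\}$ lies in $L\setminus\{x^1,\ldots,x^i\}$ and is a safe output. (For $i=0$, $C$ contains all clauses including contradictory unit pairs, so $\phi_C$ is unsatisfiable and we output arbitrarily; this costs one mistake.)

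\textbf{Mistake bound.} A mistake can only occur when $\{x^1,\ldots,x^i\}$ is exactly the solution set of $\phi_C$. Any later example $x^{i+1}\in L$ is then not a solution of $\phi_C$, so it falsifies some clause of $C$, and that clause is removed. Since $C$ only shrinks and starts with $O(n^2)$ clauses, there are at most $O(n^2)$ mistakes.

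\textbf{Polynomial-time enumeration.} This is the main step. We need either a solution of $\phi_C$ outside a given list of $i$ assignments, or a certificate that none exists, in time $poly(n,i)$. I would use the self-reducibility of 2-SAT, which is decidable in polynomial time. Enumerate solutions by the standard backtracking: fix $x_1=0$, test satisfiability of the restricted 2-CNF, recurse, then try $x_1=1$, and so on. Prune only unsatisfiable branches. Each leaf is a distinct solution and each internal path costs $O(n)$ 2-SAT calls, so the first $i+1$ solutions are listed in time $poly(n,i)$, or the enumeration stops earlier having listed all of them. Among $i+1$ distinct solutions, one is not among $x^1,\ldots,x^i$. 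If fewer than $i+1$ exist, we compare them against the examples, output a fresh one if there is one, and otherwise make an arbitrary (possibly mistaken) output.

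Restrictions of 2-CNFs stay 2-CNFs (unit clauses allowed), so the self-reduction is valid. The only subtlety is ensuring that enumeration with polynomial delay stops after $i+1$ solutions, which the pruning guarantees.
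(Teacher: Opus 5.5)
Your proof is correct and follows the paper's argument. You take the strongest 2-CNF consistent with the examples, output a fresh satisfying assignment of it, and bound the mistakes by the strict shrinking of the $O(n^2)$-size clause set. The only difference is the search subroutine: you list up to $i+1$ solutions by DFS with 2-SAT pruning, while the paper makes one 2-SAT call per prefix of some example with its last bit flipped that is not a prefix of any example. Both use $O(ni)$ satisfiability calls.
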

\begin{proof}
Given adversary's examples $w^1, \ldots, w^i\in\{0,1\}^n$, we calculate all disjunctions of 2 literals that take value 1 on all these examples, and compose them into a 2-CNF $\phi_i$. All clauses of the  adversary's real 2-CNF $\phi$ are included into clauses of $\phi_i$. Hence, it is a ``safe guess'' to output any satisfying assignment to $\phi_i$ different from $w^1,\ldots, w^i$. If no such assignment exists, we might make a mistake. But then the next adversary's example will not satisfy $\phi_i$, and the number of clauses in $\phi_{i+1}$ will be strictly smaller. Hence, we will make at most $O(n^2)$ mistakes.

It only remains to explain, how to find a satisfying assignment $w$ to a 2-CNF $\phi_i$ such that $w\neq w^1, \ldots, w\neq w^i$,    in polynomial time (if such $w$ exists). For that, we go over all $w^\ell = w^\ell_1\ldots w^\ell_n$, $\ell = 1, \ldots, i$, Next, we go through all prefixes of $w^\ell$, inverting their last bit. Whenever we obtain a word which is not a prefix of any of the words $w^1,\ldots,w^i$, we try to find a satisfying assignment to $\phi_i$, consistent with this word (some number of the initial variables are fixed as in this word) -- by simplifying $\phi_i$ and then using the polynomial-time algorithm for the 2-SAT. 

Whenever we obtain a satisfiable 2-CNF in this way, its solution will be a new satisfying assignment to $\phi_i$. On the other hand, if $w$ is a satisfying assignment to $\phi_i$, distinct from $w^1, \ldots, w^i$, it will be considered at some point in our algorithm. Indeed, consider the smallest prefix of $w$ which is not a prefix of any $w^1, \ldots, w^i$. A prefix  which is one bit shorter will be prefix of some $w^\ell$, and hence $w$ will be a solution to a simplification of $\phi_i$, arising by inverting the bit after this prefix.

\end{proof}

\section{Monotone Functions with Polynomially Many Maxterms}

We consider the ``inclusion order'' on $\{0, 1\}^n$:
\[x\preceq y \iff \forall i = 1, \ldots, n\,\, x_i \le y_i \qquad x,y\in\{0, 1\}^n.\]

A function $f\colon\{0, 1\}^n$ is monotone if $x\preceq y \implies f(x) \le f(y)$ for all $x, y\in\{0, 1\}^n$. A maxterm of $f$ is a maximal element of $f^{-1}(0)$.
\begin{theorem}[Restatement of Theorem \ref{thm_upperbound}]
For any polynomial $p(n)$, the family of monotone Boolean functions with at most $p(n)$ maxterms is polynomial-time MBLG.
\end{theorem}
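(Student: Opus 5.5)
The plan is to generate from the upward closure of what has been seen, and to handle the moments when that closure is exhausted using the fact that $f$ has at most $p(n)$ maxterms.

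\textbf{Reformulation.} A monotone $f$ with maxterms $m_1,\ldots,m_q$ ($q\le p(n)$) satisfies $f(x)=0$ iff $x\preceq m_j$ for some $j$. Its language $L$ is therefore upward closed, and its complement is the down-closure of at most $p(n)$ points. Let $S=\{x^1,\ldots,x^i\}$ and let $U=\{y:\exists \ell\ x^\ell\preceq y\}$. By monotonicity $U\subseteq L$, so any $y\in U\setminus S$ is a safe guess. Such a $y$ can be found in polynomial time with the prefix trick from Proposition \ref{prop_2cnf}. For each $x^\ell$ and each prefix of a word in $S$, flip the last bit and check whether some completion is $\succeq x^\ell$ and avoids all prefixes of $S$; this is a trivial constraint check because the constraint $y\succeq x^\ell$ only forces bits to $1$. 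It takes $poly(ni)$ steps. So mistakes can only occur when $U=S$, i.e. $S$ is upward closed.

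The bare upward-closure strategy alone can make exponentially many mistakes: if the adversary prints $1^n$ and then words in decreasing weight, $U$ is exhausted at every step. The bound $p(n)$ on the number of maxterms must therefore be used at exactly these exhaustion steps.

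\textbf{Exhaustion steps.} When $S$ is upward closed and $S\ne\{0,1\}^n$, we guess a \emph{maximal element} $y$ of $\{0,1\}^n\setminus S$. Every such $y$ other than $1^n$ satisfies $y+e_j\in S$ for each zero position $j$, so $y=s-e_j$ for some $s\in S$. Hence the candidate set $C(S)$ of maximal elements has size at most $ni$. It can be listed in polynomial time by trying all $s-e_j$ and testing maximality, which only needs membership tests in $S$.

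The key observation is that if $y\in C(S)$ and $y\notin L$, then some maxterm $m\succeq y$ exists. Since $m\notin S$ and everything strictly above $y$ lies in $S$, we get $m=y$. Thus every mistake at an exhaustion step is made on an actual maxterm of $f$. Moreover, a maxterm is never printed, so once it enters $C(S)$ it remains a candidate forever.

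\textbf{Main obstacle: choosing among candidates.} A candidate in $L$ may legitimately stay unprinted for a long time, and we get no feedback. So we cannot simply blacklist the candidates we have already guessed. My first attempt would be a round-robin rule: at an exhaustion step, guess the candidate in $C(S)$ that has been guessed least often so far, breaking ties by the time it first became a candidate. I would then try to prove that the total number of wrong guesses is $poly(n)\cdot p(n)$ via a potential argument. Each wrong guess hits one of at most $p(n)$ permanent maxterm-candidates. Each correct candidate disappears once printed, and every exhaustion step is followed by a print that strictly enlarges $S$. The goal would be to show that a maxterm can be guessed at most about $n$ times between successive "progress events."

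If round-robin fails, the fallback is to pick, among candidates, one of maximal Hamming weight (or lexicographically extreme), and to argue along chains in the inclusion order. The candidate structure comes from the at most $p(n)$ maxterms together with points of the form $s-e_j$, and the aim is that each maxterm can block progress only $O(n)$ times. Establishing this counting bound is the step I expect to be hardest. The polynomial-time implementation and the safety of guesses from $U$ are routine.
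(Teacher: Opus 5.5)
Your algorithm is the right one, and it is essentially the paper's. You output a fresh element of the up-closure when one exists. Otherwise, at an exhaustion step, you output a maximal unprinted word that has been guessed least often. Your supporting observations are also correct: every exhaustion-step mistake lands on a maxterm, maxterms never leave the candidate set, and $C(S)$ has polynomial size.

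However, the proposal has a genuine gap. The mistake bound is the whole content of the theorem, and you leave it as something you "would try" to prove. The shape you aim for would not suffice: "about $n$ guesses of a maxterm between progress events" says nothing useful when there are exponentially many such events. What is needed is a \emph{global} $O(n)$ bound on each candidate's count, coming from the logarithm of the number of exhaustion steps.

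Your fallback also fails outright. Take $f=x_n$, whose only maxterm is $1^{n-1}0$, and let the adversary print $L=\{x: x_n=1\}$ in non-increasing weight order. Then $S$ is upward closed after every print, so every step is an exhaustion step. Once the $n$ words of $L$ of weight $\ge n-1$ are printed, $1^{n-1}0$ is the unique maximum-weight candidate for the rest of the run. So the max-weight rule errs on roughly $2^{n-1}$ steps.

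The missing argument has two parts.

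\textbf{Structural fact.} Between two consecutive exhaustion steps, at least one \emph{current} candidate is printed. At an exhaustion step the unprinted set is downward closed, hence equal to the down-closure of $C(S)$. So the next printed word lies below some $c\in C(S)$. At the next exhaustion step $S$ is upward closed and contains that word, so $c\in S$. Candidates leave only by being printed, so they never return, and every new candidate enters with count $0$. Under the least-used rule, the counts therefore obey a simple game: increment a \emph{minimum} entry, erase at least one entry, and add some zeros.

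\textbf{Potential argument for this game.} Let $\Phi=\sum_c (2^{c}-1)$ over the current counts.
\begin{itemize}
\item Incrementing a minimum entry $m$ adds $2^m$ to $\Phi$.
\item The erased entry is either the incremented one or some other entry, which is $\ge m$ by minimality. Either way it removes at least $2^m-1$.
\item Adding zeros is free.
\end{itemize}
So $\Phi$ grows by at most $1$ per step. If some entry is raised from $k-1$ to $k$ at step $t$, then $2^{k-1}-1\le \Phi_{t-1}\le t-1$, so $t\ge 2^{k-1}$. There are at most $|L|\le 2^n$ exhaustion steps, so no count exceeds $n+1$. Hence there are at most $(n+1)p(n)$ mistakes.

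This is the paper's ``game with numbers'' lemma, which it proves via a coin-cost version of the same potential.
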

\begin{proof}
     Imagine that words, printed by the adversary, get marked. Each time, we have to output a non-marked word. Since we consider only monotone functions, it is safe to output any non-marked word if below it (in the inclusion order) there is some marked word. So we only have to care about ``crucial'' moments when the set of marked words is upwards-closed. 

    In these moments, we will always output some maximal non-marked word. We make a mistake when we output a word from $f^{-1}(0)$. Thus, we can only make a mistake when we output a maxterm of $f$.  To make sure that we make at most $poly(n)$ mistakes, we will give an algorithm such that every word $x\in\{0, 1\}^n$ gets outputted at most $O(n)$ times at crucial moments.

    We will achieve this via the following simple strategy. We maintain the set  $M$ of maximal non-marked elements. Note that the size of $M$ is polynomial in the bit-length of $(x^1, \ldots, x^i)$ because every its element has some string from $\{x^1, \ldots, x^i\}$ directly above it, but every string $\{x^1, \ldots, x^i\}$ can have at most $n$ elements of $M$ directly below it. Further, we maintain how many times each element of $M$ has been given to the output at crucial moments. When there is a crucial moment, we choose any element of $M$ that has been given the minimal number of times.

    Let us study how the set $M$ evolves at crucial moments. The key observation is that each time, at least one element of $M$ goes away (and maybe some new elements are added). Moreover, elements that go away never come back. Indeed, in the crucial moment, all strings under any element of $M$ are non-marked (because marked strings are upwards-closed). Hence, if all elements of $M$ would have survived, the set of non-marked elements would increase or stay the same. However, this set decreases over time. Moreover, an element $m\in M$ can go away only if $m$ or something below it gets a mark, and then it can never enter $M$ again.

Since strings never return to $M$, and since the size of $M$ is polynomial, maintaining $M$ together with the number of times its elements have been given as an output takes polynomial time. 

Next, these numbers of times -- how many times each element of $M$ has been used at  crucial moments -- satisfy the rules of the following game with numbers written on a board:

\begin{itemize}
    \item initially, one can write any number of 0's on the board (this corresponds to the first crucial moment when the set $M$ is created, and no string has been given as the output yet).
    \item then you work in steps. At each step, you first increase the minimal number on the board by 1 (if there are multiple minimal numbers, you increase just one). Then at least one number has to be erased from the board, and then some new zeroes can be added.
\end{itemize}

In our application, the number of steps is bounded by the number of crucial moments, which is at most $2^n$. It now suffices to establish the following lemma.
\begin{lemma} 
Let $n\ge 1$.  Then the minimal number of steps for a number $n$ to appear on the board in the game with numbers is at least $2^{n - 1}$.
    \end{lemma}
    \begin{proof}
    Imagine numbers as stacks of coins. Somewhat strangely, a number $m$ will be represented as a stack of $m + 1$ coins -- a level-$0$ coin, a level-1 coin, ..., a level-$m$ coin. Coins at different levels will have different costs. Namely, a level-$0$ coin has cost 1, and  a level-$i$ has cost $2^{i-1}$ for $i\ge 1$.

    Look at the total cost of all the coins. When we add 0, the total cost increases by 1. Now, when we add +1 to a number $m$, we create a level-$(m+1)$ coin whose cost is $2^m$. But then we either delete it right away, or we delete some other number that has to be at least $m$.  The cost of all the coins that we delete is at least $1 + 1 + 2 + \ldots + 2^{m-1} = 2^m$, so the total cost does not increase.

    Hence, to create a coin at level $n$, we have to put at least $2^{n-1}$ 0-level coins. Although we can put a lot of 0-level coins at a time, each 0-level coin can be converted into a 1-level coin just one at a time, so we require at least $2^{n-1}$ steps.
    \end{proof}

\section{A Lower Bound for Disjunctions of 3 Literals}

The precise statement of the strong RSA assumption is as follows.

\begin{assumption}[\cite{baric1997collision}]
\label{as_rsa}
    There is a polynomial-time randomized algorithm $G$, on input $1^n$ generating two distinct $n$-bit primes $q_1,q_2$, such the following holds.

    There is no probabilistic polynomial-time algorithm that, on input
    \[(1^n, N, x)\]
    where $N = q_1q_2$ with $(q_1, q_2)\sim G(1^n)$ and $x\in_R\mathbb{Z}_N^*$, finds $y\in\mathbb{Z}_N^*$ and a prime $e < N$ such that $y^e \equiv x \pmod{N}$ with probability at least $1/poly(n)$.
    
\end{assumption}

In this section, we establish our main lower bound.

\begin{theorem}[Restatement of Theorem \ref{thm_lowerbound}]
\label{thm_lowerbound2}
    Under the strong RSA assumption, the family of disjunctions of 3 literals is not polynomial-time MBLG.
\end{theorem}

\subsection{Proof of Theorem \ref{thm_lowerbound2}}

The first step of the proof of Theorem \ref{thm_lowerbound2} is a reduction from 3CNFs.
\begin{lemma}
    If the family of disjunctions of 3 literals is polynomial-time MBLG, then the family of 3-CNFs is polynomial-time MBLG.
\end{lemma}
\begin{proof}
    Apply the algorithm for disjunctions of 3 literals, without changes, to 3-CNFs. Each 3-CNF $\phi$ is written as
    \[\phi = D_1 \land D_2 \ldots \land D_m,\]
    where $m \le O(n^3)$ and $D_1, \ldots, D_m$ are disjunctions of 3 literals. Every examples we see that satisfies $\phi$ satisfy each of the disjuncions $D_1, \ldots, D_m$. Hence, for each $i = 1, \ldots, m$, our algorithm for disjunctions outputs a word, not satisfying $D_i$, it most $p(n)$ times for some polynomial $p(n)$. Hence, it outputs a word, not satisfying $\phi$, at most $mp(n) = O(n^3 p(n))$ times.
\end{proof}

The next step is to reduce from polynomial-size circuits. That is, we show that if 3CNFs were polynomial-time MBLG, then so would be polynomial-size circuits, but in the \emph{open  regime}. The latter means a modification of the MBLG setting, where the adversary in the beginning \emph{reveals} to us the language it gives examples from (in our case, a polynomial-size circuits). In this regime, the complexity of identifying the language completely goes away, and we are only left with the problem of generating new examples, satisfying a given circuit, seeing the adversary's examples.

\begin{lemma}
If the family of 3CNFs is polynomial-time MBLG, then  for any polynomial $p(n)$, the family of $p(n)$-size circuits is polynomial-time MBLG in the open regime.    
\end{lemma}
\begin{proof}
    In the open MBLG for $p(n)$-size circuits, we are seeing a circuit $C$ and some strings $x^1, \ldots, x^m$ that satisfy it. Our task is to come up with a new example. Take a standard reduction from Circuit-SAT to 3-SAT where a circuit $C$ is transformed into polynomial-size 3CNF $\phi_C$ (over variables of $C$ and some fresh variables) such that satisfying assignments to $C$ are in the one-to-one correspondence with satisfying assignments to $\phi_C$ (and are easily computable one from another). Use the polynomial-time algorithm for 3CNFs, converting adversary examples of satisfying assignments to $C$ into satisfying assignments to $\phi_C$, and then new examples of satisfying assignments to $\phi_C$ from the algorithm into satisfying assignments to $C$.
\end{proof}

The proof is concluded by the following lemma.
\begin{lemma}
\label{lem_reduction}
If for any polynomial $p(n)$, the family of $p(n)$-size circuits is polynomial-time MBLG in the open regime, then the strong RSA assumption fails.
\end{lemma}
\begin{proof}
Consider a polynomial-size circuit $C$ that, given
\[N\in\mathbb{N} \text { with $2n$ bits}, \,\, x\in\mathbb{Z}_N^*, \,\ y\in\mathbb{Z}_N^*, \,\, e\in\{0,1,\ldots, N-1\},\]
checks, whether $e$ is a prime and that $y^e\equiv x \pmod{N}$ (it is well-known that primality testing and exponentiation in $\mathbb{Z}_N^*$ can be done in polynomial time). For any fixation of $N,x$, we obtain a circuit $C_{N,x}$ of size at most $size(C)+ O(n)$, not exceeding some fixed polynomial in $n$. Take a polynomial-time MBLG-algorithm $A$ for circuits of such size in the open regime, and let $p(n) = n^{O(1)}$ be its mistake bound.

We will use this MBLG-algorithm to refute the strong RSA assumption. That is, given $(1^n, N, x)$ as in Assumption \ref{as_rsa}, with $N = q_1q_2$ for two distinct $n$-bit primes $(q_1, q_2)\sim G(1^n)$, and for $x\in_R\mathbb{Z}_N^*$, with probability $0.9$, in polynomial time, we will find $y\in\mathbb{Z}_N^*$ and a prime $e <N$
 such that $y^e \equiv x\pmod{N}$.

We sample $t = np(n)$ random primes $p_1, \ldots, p_t$ less than $N$. We simulate the adversary in the game against the MBLG-algorithm $A$, by playing the circuit:
\[C_{N, \widehat{x}}, \qquad \widehat{x} = x^{p_1 p_2 \ldots p_t}\pmod{N}.\]
Satisfying assignments to $C_{N, \widehat{x}}$ are pairs $(e, y)$ where $e < N$ is a prime and $y\in\mathbb{Z}_N^*$ is such that $y^e \equiv \widehat{x}\pmod{N}$. We continue the simulation of the adversary by playing the following satisfying assignments to $C_{N,\widehat{x}}$:
\begin{equation}
    \label{eq_simulation}
    (p_1, x^{p_2\ldots p_t}), (p_2, x^{p_1p_3\ldots p_t}), \ldots (p_t, x^{p_1\ldots p_{t -1}})
\end{equation}
(all exponents are taken modulo $N$).

We will succeed if the MBLG algorithm will ever output a satisfying assignment $(q, z)$ to $C_{N, \widehat{x}}$ with $q\neq p_1, \ldots, q\neq p_t$. Indeed,  then we can find $y\in\mathbb{Z}_N^*$ with $y^q \equiv x\pmod{N}$ via the following trick due to Shamir~\cite{shamir1983generation}. Namely, by definition, we have:
\[z^q \equiv \widehat{x} = x^{p_1\ldots p_t} \pmod{N}.\]
Then, since $p_1\ldots p_t$ and $q$ are coprime, we can use the Euclid's algorithm to find $\alpha, \beta\in\mathbb{Z}$ with 
\[\alpha p_1\ldots p_t + \beta q = 1.\]
Set
\[y = z^\alpha \cdot x^{\beta}.\]
Observe that
\begin{align*}
    y^q = z^{\alpha q} \cdot x^{\beta q} = (z^q)^{\alpha} \cdot x^{\beta q} =  x^{\alpha p_1\ldots p_t + \beta q} = x\pmod{N}.
\end{align*}

We now show that we will succeed with high probability. It is enough to lower bound the success probability conditioned on the event $E$ that  $p_1, \ldots, p_t$ are all distinct and are coprime with $\phi(N)$ (the value of the Euler's function on $N$). Indeed, this event has probability, exponentially close to 1, since we are sampling polynomially many primes out of $N/\ln N \sim 2^{2n}/n$, and since there are $O(\ln N) = O(n)$ distinct prime divisor or $\phi(N)$.

Observe first that conditioned on $E$, the simulation of the adversary in \eqref{eq_simulation} is correct in the sense that it gives distinct examples.

Next, conditioned on $E$, the distribution of $p_1, \ldots, p_t$ is a uniform drawing without replacement from the set $\mathcal{P}$  of primes smaller than $N$ that are not divisors of $\phi(N)$. 
The key observation is that, conditioned on $E$, for $x\in_R\mathbb{Z}_N^*$ independent of $p_1, \ldots, p_t$, the distributions of the tuples 
\[(x, p_1, \ldots, p_t)\]
and
\[(x^{p_1\ldots p_t}\mathrm{mod }N, p_1, \ldots, p_t)\]
coincide. This is due to a fact that for an exponent $p$, coprime with $\phi(N)$, the mapping $x\mapsto x^p$ is a permutation of $\mathbb{Z}_{N}^*$ \footnote{Indeed, let $p'$ be the inverse of $p$ modulo $\phi(N)$, and observe that $(x^{p'})^{p}\equiv x \pmod{N})$} . Hence, for any fixed values of $p_1, \ldots, p_t\in\mathcal{P}$, the distribution of $x^{p_1\ldots p_t} \mathrm{mod }N$ is uniform over $\mathbb{Z}_N^*$ which proves the equality of two distributions. 

Thus, to analyze the probability of success, we can assume that the MBLG-algorithm is given $x$ instead of $\widehat{x}$, and then fed
\[(p_1, x^{1/p_1}), (p_2, x^{1/p_2}), (p_3, x^{1/p_3}), \ldots \]
where $p_1, \ldots, p_t$ are drawn without replacement from $\mathcal{P}$ independently of $x$ (here $x^{1/p}$ denotes $y\in\mathbb{Z}_N^*$ such that $y^p \equiv x \pmod{N}$ which is unique for $p\in\mathcal{P}$). For at least one $i = 1, \ldots, n$, since $t > p(n)$, after seeing the first $i$ examples, the algorithm will output some  $(q_i, y_i)$  such that $y_i^{q_i}\equiv x\pmod{N}$ and
\[(q_i, y_i) \neq (p_1, x^{1/p_1}), \ldots, (q_i, y_i)\neq (p_i, x^{1/p_i}).\]
In particular, $q_i$ is distinct from $p_1, \ldots, p_i$ (again, because for $p  \in\mathcal{P}$, there is just a single possible value of $x^{1/p}$). Likewise, with high probability, $q_i$ is distinct from $p_{i+1}, \ldots, p_t$. Indeed, $q_i$ is a function of $x, p_1, \ldots, p_i$, and for each of their possible values, $p_{i+1}, \ldots, p_t$ are sampled uniformly at random from an exponentially large set. The probability that one of these polynomially many numbers will coincide with $q_i$ is exponentially small.

\end{proof}

\section{Further observations}

\begin{proposition}
Under the strong RSA assumption, polynomial-time MBLG families are not closed under union.
\end{proposition}
\begin{proof}
    Consider the families $\mathcal{F}_{pos}, \mathcal{F}_{neg}$ of disjunctions of 3 literals with, respectively, at most 1 positive and at most 1 negative literal. The family $\mathcal{F}_{pos}$ is polynomial-time MBLG by the same argument as in  Proposition \ref{prop_2cnf} due to the fact that Horn-SAT (the CNF-SAT problem where each clause has at most one positive literal) is polynomial-time solvable. Likewise, $\mathcal{F}_{neg}$ is polynomial-time MBLG by a dual argument. But the union of these 2 families gives the set of all disjunctions of 3 literals, which is not polynomial-time MBLG under the strong RSA assumption by Theorem \ref{thm_lowerbound}.
\end{proof}

We conclude by comparing polynomial-time MBLG with polynomial-time PAC learning. 

\begin{definition}
     Let $\{\mathcal{F}_n\}_{n = 1}^\infty$ be a sequence of families of Boolean functions, where $\mathcal{F}_n$ consists of Boolean function with input length $n$. Assume that $|\mathcal{F}_n| \le 2^{poly(n)}$. We say that  $\{\mathcal{F}_n\}_{n = 1}^\infty$ is \textbf{polynomial-time PAC-learnable} if there exists an algorithm with the following properties:
     \begin{itemize}
         \item on input, it gets $1^n$, a rational number $\varepsilon$, a word $\widehat{x}\in\{0, 1\}^n$, and the sample access to a pair $(f, D)$, where $f\in \mathcal{F}_n$ and $D$ is a probability distribution over $\{0, 1\}^n$ (meaning that it can sample pairs $(x, f(x))$ for $x\sim D$, but does not have an explicit descriptions of these objects);
         \item it works in $poly(n, 1/\varepsilon)$ time;
         \item for any $(f, D)$ with $f\in\mathcal{F}_n$, with probability $1 - \varepsilon$ over $\widehat{x}\sim D$ (and over the internal randomness of the algorithm, including sampling $(f,D)$), the output  of the algorithm coincides with $f(\widehat{x})$.
     \end{itemize}

\end{definition}

It is known that family of functions with Ehrenfeucht-Haussler rank at most 1 is polynomial-time PAC learnable 1~\cite{ehrenfeucht1989learning}, and this family includes disjunctions of literals. Hence, by Theorem \ref{thm_lowerbound}, we obtain:

\begin{corollary}
    Under the strong RSA assumption, there exists a family which is polynomial-time PAC learnable but not polynomial-time MBLG.
\end{corollary} 

As we show, assuming existence of injective one-way functions, the oppositive behaviour is possible.

\begin{definition}
    A polynomial-time computable function $f\colon\{0, 1\}^* \to \{0, 1\}^*$ is called one-way if there is no polynomial-time randomized algorithm that for infinitely many $n$, with probability $1/poly(n)$, given $1^n$ and $f(x)$ with $x\in_R\{0, 1\}^n$, returns some $\widehat{x}\in\{0, 1\}^n$ such that $f(x) = f(\widehat{x})$.
\end{definition}

Note that for general one-way function, it should not be possible to return any element of the pre-image of $f(x)$. For injective one-way functions, this preimage consists only of $x$ itself.

\begin{proposition}
    Assuming that there exists an injective one-way function, there exists a polynomial-time MBLG family which is not polynomial-time PAC learnable.
\end{proposition}
\begin{proof}
    Let $f$ be an injective one-way function. Let $f_n$ denote its restriction to $n$-bit words. Without loss of generality, we may assume that its outputs all have length $p(n)$ for some polynomial $p$. Our family will have just one language $L$, consisting of all pairs $(i, y) \in \{1, \ldots, n\}\times \{0, 1\}^{p(n)}$ such that there exist $x\in\{0, 1\}^n$ with $f(x) = y$ and $x_i = 1$.

    On the one hand, this singletone family is polynomial-time MBLG. Indeed, seeing
    \begin{equation}
        \label{eq_oneway}
        (i^1, y^1), \ldots, (i^m, y^m)\in L
    \end{equation}
    from the adversary,
    we start going through all indices $i =1,\ldots, n$ and all $n$-bit words $x$ with $x_i = 1$, computing $y = f(x)$, until $(i, y)$ is not in \eqref{eq_oneway}. Since $f$ is injective, we will require at most $m + 1$ such steps unless the whole $L$ is already listed. Thus, it  takes us $poly(mn)$ time.

On the other hand, we show that a polynomial-time PAC learner for this family can be turned into a polynomial-time inverter of $f$. Indeed, denote by $\phi$ the function, taking value 1 exactly on words from $L$. For $i\in\{1,2,\ldots, n\}$, let $D_i$ be a distribution of $(i, f(x))$, where $x\in_R\{0, 1\}^n$. The sample access to $(\phi, D_i)$ can be simulated by the inverter itself. Namely, we can sample $x\in_R\{0, 1\}^n$ and compute $(i, f(x))$. By injectivity of $f$, the value of $\phi$ on $(i, f(x))$ is one if and only if $x_i = 1$. Using the PAC-learner for $\varepsilon = 1/n^2$, for every $i = 1, \ldots, n$, we get an polynomial-time algorithm that with probability $1 - 1/n^2$, given $\widehat{y} = f(\widehat{x})$ for $\widehat{x}\in_R\{0, 1\}^n$, returns that $i$-th bit of $\widehat{x}$. Using it for all $i$, we obtain a contradiction with the fact that $f$ is one-way.

\end{proof}


\section{Conclusion}

It would be interesting to weaken the the assumption of Theorem \ref{thm_upperbound}. Likewise, it is interesting to obtain lower bounds for monotone Boolean functions or restricted classes of regular languages -- for example, polynomial-size monotone circuits. Techniques that we have presented here does not seem to be suitable for this.

\end{proof}




\end{document}